\theoremstyle{definition} 
\newtheorem{definition}{Definition} 
\newtheorem{proposition}{Proposition}
\newtheorem{theorem}{Theorem}
\newtheorem{lemma}{Lemma}
\newtheorem{assumption}{Assumption}
\newcommand{\metricname}{AVA}
\begin{document}

\title{Towards Efficient Service Information Refresh in Compute First Networking: Reinforcement Learning with Joint AoI and VoI}

\author{Jianpeng Qi, 
        Chao Liu,
        Chengxiang Xu,
        Rui Wang,
        Junyu Dong,
        Yanwei Yu
\thanks{Jianpeng Qi, Chao Liu, Junyu Dong, and Yanwei Yu are with the Faculty of Information Science and Engineering, Ocean University of China. Chengxiang Xu is with Inspur (Jinan) Data Technology Co., Ltd. Rui Wang is with the School of Computer and Communication Engineering, University of Science and Technology Beijing.}
\thanks{Correspondence: Yanwei Yu.}
}

\markboth{Journal of \LaTeX\ Class Files,~Vol.~14, No.~8, August~2021}%
{Jianpeng \MakeLowercase{\textit{et al.}}: Towards Efficient Status Refresh in Compute First Networking: Reinforcement Learning with Joint AoI and VoI}


\maketitle

\begin{abstract}
Timely and efficient dissemination of service information is critical in compute-first networking systems, where user requests arrive dynamically and computing resources are constrained. In such systems, the access point (AP) plays a key role in forwarding user requests to a server based on its latest received service information. This paper considers a single-source, single-destination system and introduces an Age-and-Value-Aware (AVA) metric that jointly captures both the timeliness and the task relevance of service information. Unlike traditional freshness-based metrics, AVA explicitly incorporates variations in server-side service capacity and AP’s forwarding decisions, allowing more context-aware update evaluation. 
Building upon AVA, we propose a reinforcement learning-based update policy that learns to selectively transmit service information updates to the AP. It aims to maximize overall task success while minimizing unnecessary communications. Extensive simulations under diverse user request patterns and varying service capacities demonstrate that AVA reduces the update frequency by over 90\% on average compared to baselines, with reductions reaching 98\% in certain configurations. Crucially, this reduction is achieved without compromising the accuracy of task execution or the quality of decision making.
\end{abstract}

\begin{IEEEkeywords}
compute first networking, status update, age of information, value of information.
\end{IEEEkeywords}
\section{Introduction}
In vehicular networks, vehicles act as source nodes by relaying sudden incident reports to roadside units (RSUs), which serve as destination nodes. The RSUs analyze and forecast the situation, then sending commands to nearby vehicles to facilitate safe navigation \cite{RN76}. Similarly, in Compute First Networking (CFN), real-time updates on computational resources (source nodes) are communicated to the network edge (destination nodes), enabling clients to accurately assess remote available resources and offload tasks accordingly. In unmanned aerial vehicle (UAV) control systems, drones transmit environmental data collected on-site to a control station, which processes the information and issues operational directives to accomplish reconnaissance tasks. As depicted in \figurename~\ref{fig:scenario}, these scenarios illustrate a structured closed-loop framework consisting of four essential stages: ``Information updating'', ``service information consumption'', ``task forwarding'', and ``task execution''. The cyclic process of transmitting resource or environmental data and executing subsequent tasks is widespread across various networking scenarios.

From a more general perspective, emerging networking paradigms such as CFN, also known as Computing Power Networks (CPN) \cite{RN2652, RN3, RN1449}, terminal devices frequently offload locally generated computational tasks to remote service nodes (edge servers) for execution. To assess the availability of a service node, servers periodically or conditionally transmit their current resource status to access points (APs) when significant changes occur or when coordination is necessary. Subsequently, the AP decides whether to offload the client's task based on this information.

\begin{figure*}[htbp]
    \centering
    \includegraphics[width=\linewidth]{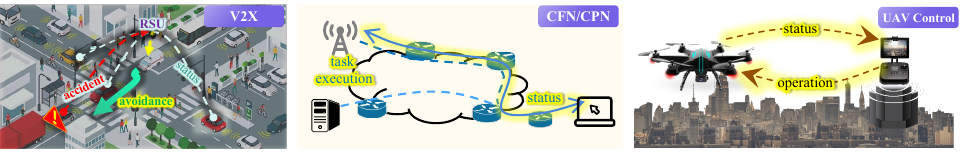}
    \caption{Scenarios of compute-aware task offloading}
    \label{fig:scenario}
\end{figure*}

It can be seen that, the timeliness and accuracy of service information (i.e, resource status) in these systems directly impact the likelihood of successful task execution. Incorrect or outdated reports can cause the AP to misjudge the availability or capability of service nodes, leading to failed task redirection or inefficient offloading decisions. In contrast, excessively frequent updates can lead to information congestion and increased communication resource consumption, particularly in scenarios requiring the transmission of large volumes of monitoring data. To address this trade-off, the research community has introduced theoretical frameworks such as of Information (AoI) \cite{RN1372} and and Value of Information (VoI) \cite{RN2760} to assess and optimize information update mechanisms.

AoI is a performance metric designed to quantify the timeliness of information. Unlike traditional end-to-end delay, AoI focuses on the time elapsed between the generation of an update and its reception at the destination. More precisely, in a system where information updates are continuously generated, the AoI at a given moment is defined as the time that has passed since the most recently received update was generated. When a new packet arrives, AoI is reset to the generation-to-arrival delay of that packet; otherwise, AoI increases linearly over time. This metric intuitively reflects the freshness of information: A smaller AoI indicates fresher data, whereas a large AoI implies that the information has become stale. In addition, extended variants such as AoP (Age of Processing) \cite{RN1181} and QAoI (Query Age of Information) \cite{RN1437} have been proposed. AoP captures delays in processing, while QAoI specifically reflects the age of the information at the time it is queried or consumed by the decision-making node, offering a more comprehensive view of latency and relevance in complex systems.

In contrast, VoI evaluates the importance of information from a semantic and functional perspective. In essence, it quantifies the degree to which a piece of information enhances decision making or the utility of the system. For example, in network control systems, the ``value'' of the information can be interpreted as the amount by which it reduces environmental uncertainty \cite{RN2761}. In this context, VoI serves as a measure of usefulness: If an update significantly improves decision accuracy or system performance, it is considered valuable; otherwise, its contribution may be negligible. Unlike AoI, which emphasizes temporal aspects, VoI focuses on semantic significance and practical impact. 

However, in CFN scenarios, especially those that involve edge-side decision making, AoI alone may not sufficiently capture resource dynamics. For example, if the system’s state remains stable for an extended period, older information might still be relevant. Conversely, during rapid state fluctuations, the value of information may decay quickly over time, making AoI the dominant factor. This importance becomes particularly pronounced when the service supports concurrent access, making the capacity information embedded within updates especially critical, especially in the context of rapidly evolving virtualization technologies. To the best of our knowledge, the CFN literature still lacks comprehensive methods that simultaneously account for both the content and freshness of information in optimizing task offloading decisions.

To address this gap, we investigate a single-source, single-destination CFN system and propose a unified metric, the Age-and-Value-Aware (\metricname) information metric, to guide the update strategy of remote service information. Specifically, we define the service node’s capacity state as information, and consider the content of the information, its dissemination process, and client access behavior to model the system as a Markov Decision Process (MDP). We design a novel reward function based on both AoI and VoI, and apply reinforcement learning to optimize the update policy using the Proximal Policy Optimization (PPO) algorithm \cite{schulman2017proximal}. Extensive experimental results demonstrate that AVA achieves a 92.04\% reduction in update rate compared to QAoI in deterministic access patterns and a 98. 34\% reduction compared to AoP in uniform access scenarios, all while maintaining the accuracy of the decision. 
This paper makes the following key contributions:
\begin{itemize}
    \item We focus on single-source, single-destination system and propose a novel metric, \metricname, which integrates both information freshness and semantic consistency to guide service information updates in CFN.
    \item We formulate the update decision process as an MDP and design a customized reward function that incorporates our AVA metric. Meanwhile, we develop and train a reinforcement learning agent using PPO to learn optimal update strategies without requiring an explicit model of the system. 
    \item Extensive simulations demonstrate that AVA achieves up to 98.34\% reduction in update rates compared to existing methods while maintaining decision accuracy.
\end{itemize}

The remainder of this paper is structured as follows: Section \ref{sect:related-work} reviews related work on information effectiveness metrics and update optimization strategies. Section \ref{sect:system-mode-and-problem} describes the system model, key definitions, and introduces the proposed AVA metric. Section \ref{sect:mdp} formulates the optimization problem as a Markov Decision Process (MDP) and propose the PPO solution. Section \ref{sect:analysis} presents theoretical analysis. Section \ref{sect:experiments} reports simulation settings and experimental results. Finally, Section \ref{sect:conclusion} concludes the paper.

\section{Related Work} \label{sect:related-work}
We review related studies from two perspectives: Metrics for evaluating information effectiveness and methods for optimizing update decisions.

\subsection{Information Effectiveness Metrics}
In the context of monitoring information effectiveness, the AoI has been widely adopted to quantify the time elapsed from when a status update is generated at the source node to when it is received at the destination (e.g., the edge node or AP). Metrics such as Average AoI and Peak AoI (PAoI) are commonly used to assess the freshness of information over time \cite{RN1372}. Recent work has expanded the scope of AoI to incorporate queueing effects, energy consumption, transmission loss, and dynamic network delays, aiming to ensure that the freshest possible information is available at any moment \cite{RN1372, RN39}.

Several studies have emphasized that information becomes most valuable only after it is analyzed or processed \cite{RN7, RN1, RN6}. Moltafet et al. \cite{RN48} extended the AoI model by accounting for bidirectional delays in the network, including both information dispatch and result feedback. Pan et al. \cite{RN83} considered unreliable links and bidirectional latency, using a non-decreasing function of AoI (i.e., estimation error) to evaluate information effectiveness. In our earlier work \cite{RN2759}, we proposed a three-stage information effectiveness metric called TPAoI, which integrates the processes of status update, edge-side information consumption, and task dispatch. This metric effectively reduces update frequency while preserving freshness at the decision point.

However, the monitored system state often evolves as a nonlinear stochastic process \cite{RN51, RN34, RN14, RN2},  making it difficult for time-based metrics alone to capture the availability of remote resources. This can result in either inaccurate decision-making or increased update frequency \cite{RN51, RN34, RN14, RN2}. As a result, numerous AoI variants have been proposed, such as Value of Information (VoI), Age of Incorrect Information, Age of Changed Information, and Age of View \cite{RN1372, RN73, RN37, RN25}.

Other studies have further incorporated client-side consumption behavior and task execution dynamics into their evaluations \cite{RN40, RN1181, RN26, RN37}. Yin and Chiariotti \cite{RN38, RN1437} emphasized that information attains its maximum value only when it is used for decision-making. Consequently, developing effectiveness metrics that consider client access behavior, information freshness, and semantic content has become a prevailing research direction. Nevertheless, how to construct such metrics specifically for status updates in CFN remains an open research question.

\subsection{Update Decision Optimization}
Quantifying information effectiveness ultimately serves to support the development of better update strategies that improve long-term system performance. Depending on whether the system model is known, the methods can be categorized as model-based or model-free.

Model-based methods require an accurate probabilistic model of the environment’s state transitions. For instance, Chiariotti et al. \cite{RN1437} constructed a comprehensive model involving AoI, transmission reliability, energy cost, and user behavior. In our prior work \cite{RN53}, we analytically derived the relationship between update intervals and task execution performance under concurrent multi-user access. However, such model-based approaches are often infeasible in complex or dynamic environments.

By contrast, model-free methods, such as reinforcement learning, are particularly suitable for environments that are difficult to model explicitly. These methods focus on designing reward functions and maximizing cumulative rewards through interaction with the environment. For example, Holm et al. \cite{RN25} applied a DQN-based approach, minimizing the mean-squared error between observed and actual values to enhance information accuracy. The superiority of model-free methods in handling complex and partially observable systems has been widely validated.

Despite these advances, a key challenge remains: how to effectively integrate information freshness, the dynamic capacity of service nodes, and user access behavior in order to design update strategies that not only reduce communication overhead but also ensure high decision accuracy in CFN environments. Addressing this challenge is the central motivation of our work.

\section{System Model and Problem Formulation} \label{sect:system-mode-and-problem}
We consider a time‐slotted system operating under a centralized CFN architecture, a structure that has become increasingly popular in recent years \cite{RN1784}, focusing on a single‐source, single‐destination system configuration, where each time slot has duration \(u\). Within this framework, a decision node continuously observes the system state and determines whether a service information update should be triggered; this decision is then synchronized to the service node.

The service node subsequently broadcasts its current service information (e.g., the number of remaining idle threads) to the AP, which leverages this data to make task offloading decisions on behalf of the connected clients.

To distinguish between the service information and the observed system state, we first introduce the following two definitions:
\begin{definition}[System state]
The \emph{system state} refers to the global and time-varying condition of the entire CFN system. It encompasses the real-time status of service nodes, the arrival patterns of client tasks, network transmission characteristics, and overall resource availability.
\end{definition}
The system state is observed by the decision node (which may be a service node or a centralized controller), and it serves as the basis for determining when to update service information and how to make task offloading decisions.
 
\begin{definition}[Service information]
\emph{Service information} refers to the resource status message broadcast by a service node, in this paper, most commonly represented by its residual processing capacity (e.g., thread count $c_t$). 
\end{definition}
Due to transmission latency $\tau\!>\!0$, the AP receives a delayed observation $\hat{c}_t := c_{t-\tau}$, which may not reflect the server’s current state $c_t$.

Some important symbols and explanations can be found in \tablename~\ref{tbl:symbols}. 

\begin{table}[htbp]
    \caption{List of important symbols}
    \label{tbl:symbols}
    \centering
    \renewcommand{\arraystretch}{1.2}
    \begin{tabular}{c|p{7cm}}
        \hline
        \textbf{Symbol} & \textbf{Meaning / Description}  \\ \hline \hline 
        $C_{\text{MAX}}$      & Maximum concurrent threads a service node can process \\ \hline
        $c_t$                 & Actual available thread count at the service node at time $t$  \\ \hline
        $\hat{c}_t$           & Thread count observed by the AP at time $t$ \\ \hline
        $w_{j}$               & Workload size of the $j^{th}$ task  \\ \hline
        $V_s$                 & Per-thread service rate of the server \\ \hline
        $\tau$                & Network transmission delay of a service information update \\ \hline
        $t_{q,i}$             & Arrival instant of the $i^{th}$ client query at the AP  \\ \hline
        $\Delta(t)$           & AoI at time $t$  \\ \hline
        $\Upsilon_i$          & $\Delta(t_{q,i})$ sampled at query $i$, i.e., QAoI  \\ \hline
        $\phi(\Upsilon_i)$    & Normalized QAoI \\ \hline
        $\mathcal{R}(c_t,\hat{c}_t)$ & Information-consistency metric between server and AP  \\ \hline
        $\delta(t)$           & AVA metric $\phi(\Upsilon_i)\cdot\mathcal{R}(c_t,\hat{c}_t)$\\ \hline
    \end{tabular}
\end{table}
\subsection{The Stages of Service Information Consumption}\label{sect:service-information-stages}

The overall process of service information consumption is illustrated in \figurename~\ref{fig:system-model}.  It consists of the following stages: (1) \textit{Service Information Updating}: Each service node broadcasts its latest service information (e.g., the currently available concurrent processing capacity) to the APs via the network. (2) \textit{Stochastic User Accessing}: Clients randomly access the APs and submit task requests. (3) \textit{Task Forwarding}: The AP, upon receiving a client request, evaluates the currently available service information and determines whether to forward the task to a service node for execution. (4) \textit{Task Running}: Upon receiving the forwarded task, the service node executes the task using its available resources and subsequently returns the result to the client.

\begin{figure}[htbp]
    \centering
    \includegraphics[width=\linewidth]{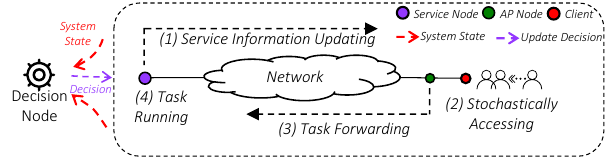}
    \caption{Illustration of the information-consumption stages}
    \label{fig:system-model}
\end{figure}

In systems where service nodes serve as decision nodes, they directly observe and collect the system state.
In architectures featuring a centralized controller (e.g., Software-Defined Networking, SDN), the controller acts as the decision node, performing system state observation and coordination, typically realized through telemetry mechanisms.

For simplicity, we assume that the system does not experience sudden anomalies or traffic bursts, such as network congestion or equipment failures.

\subsection{Key Observations}
State‑of‑the‑art update schemes seek to deliver information to the AP precisely at (or immediately before) each client access, thereby minimizing the AoI at query instants (e.g., QAoI \cite{RN1437}). Although these techniques succeed in preserving freshness with minimal update cost, they overlook the fact that modern service nodes often employ multi-threading, virtualization, and other concurrency mechanisms to handle multiple requests in parallel. Under such conditions, strictly minimizing AoI at access times leads to redundant updates that waste resources without improving the quality of the decision.
\begin{figure}[htbp]
  \centering
  \includegraphics[width=\linewidth]{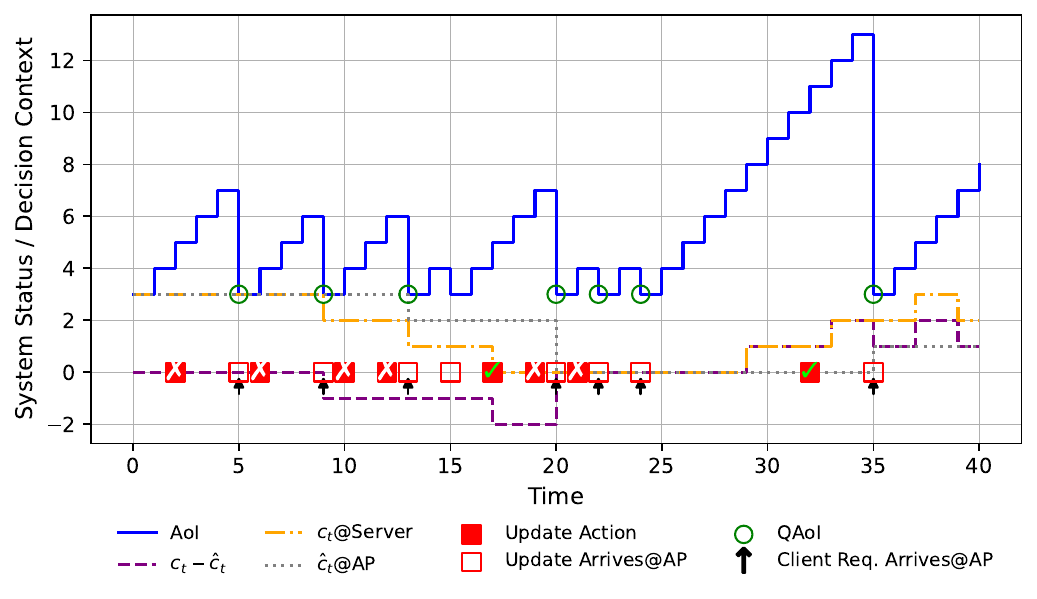}
  \caption{Timeline of system status and key decision events.}
  \label{fig:evaluation–example}
\end{figure}

To clarify this phenomenon, \figurename~\ref{fig:evaluation–example} presents a concrete scenario. In this example, we assume the server’s maximum concurrent capacity $C_{\text{MAX}}$ is 3. Both the service information update to the AP and the request forwarding to the server incur a delay of 3 time slots, and the server requires 20 time slots to process each request.
Under these assumptions, we plot the real-time AoI, the service information update decisions (Update Action), the arrival of information in the AP (Update Arrives@AP), the true capacity of the server ($c_t$@Server), the AP’s observed capacity ($\hat c_t$@AP), and the instant when the client request reaches the AP (Client Req. Arrives@AP). We also annotate the optimal QAoI values, i.e., making the update arrives at AP exactly when the client request arrives at AP.

For clarity, we mark each decision‑affecting update action with ``\ding{52}'' and each redundant update with ``\ding{53}''. It is evident that an update is efficient only when the true server capacity ($c_t$@Server) and the AP’s observed capacity ($\hat{c}_t$@AP) differ significantly (i.e., when the difference ($c_t - \hat{c}_t$) is maximal). When this difference is small, although the AP’s information does not exactly match the server’s state, decision accuracy remains unaffected.

\subsection{System Entities and Communication}

The aforementioned system involves the following key roles.

\subsubsection*{Service Node (Server)}
The service node supports a maximum processing capacity of $C_{\text{MAX}}$ concurrent threads. At time $t$, the number of available threads is denoted as $c_t \in \{0,1,\ldots,C_{\text{MAX}}\}$. Each thread operates at a fixed service rate (or speed) $V_s$. When the decision node triggers an update, the service node transmits its current service information, including $c_t$, to the AP over the network.

When the service node receives a task (with a workload size of $w_j$) forwarded from the AP, it checks the availability of processing threads. If $c_t > 0$, it allocates one thread to the task and processes it at a rate determined by $w_j / V_s$. Upon task completion, the thread is released, and $c_t$ is incremented by one, up to the maximum capacity $C_{\text{MAX}}$.

\subsubsection*{Access Point (AP)} At time $t-\tau$, the service node sends the service information $c_{t-\tau}$ to the AP, which receives it after a transmission delay $\tau$, thus observing $\hat{c}_{t}$. When a client task request arrives at the AP, the AP evaluates the received service information $\hat{c}_{t}$ to determine whether to forward the request to the service node. The forwarding strategy adopts the FIFO (first-in, first-out) principle.

Upon receiving an update packet, in practice, the AP immediately sends an acknowledgment (ACK) back to the service node (or decision node), enabling the decision node to accurately track the observed service information at any given time.
\subsubsection*{Client} The client generates task requests that arrive at the AP according to a stochastic process, such as one with uniformly distributed inter-arrival times. Each client task $j$ is characterized by a workload size $w_j$ and also follows a random variable.

\subsubsection*{Network Model} To ensure information freshness and prevent backlog, the forwarding nodes in the network maintain a queue of size one; newly received service information updates are immediately forwarded to the AP.
The transmission delay $\tau$ is assumed to depend on the size of the transmitted information. For simplicity, we model the transmission delay as an exponential random variable, $\tau \sim \text{Exp}(\lambda)$, where $\lambda$ is the network service rate (i.e., the number of service messages that can be transmitted per time slot). For the delay of client request forwarding, we also assume it costs $\lambda$.
We quantize the continuous delay into integer slots via $\tilde\tau=\lceil\tau/u\rceil$ to match our slotted implementation.

\subsection{The \metricname~Metric} \label{sect:metric}
Due to the lack of real-time synchronization between the AP and the Server, when a client task arrives at the AP, the following scenarios may occur:\\
\begin{align}
\text{Right Decision} &= 
\begin{cases}
    \hat{c}_t > 0 \land c_t > 0, & \text{true positive} \\
    \hat{c}_t = 0 \land c_t = 0, & \text{true negative}
\end{cases},
\end{align}
and 
\begin{align}
\text{Error Decision} &= 
\begin{cases}
    \hat{c}_t > 0 \land c_t = 0, \text{false positive} \\
    \hat{c}_t = 0 \land c_t > 0, \text{false negative}
\end{cases}.
\end{align}

In other words, a decision is considered erroneous if and only if the AP’s judgment contradicts the true state of the service node, resulting in either an incorrect acceptance or an incorrect rejection in real-time. For server state prediction, i.e., $c_{t+\tau}$, we leave it in future work.

Based on the above decision cases, we define the value (or content)-based service information effectiveness metric as:
\begin{align}
\mathcal{R}(c_t, \hat{c}_t) =
\begin{cases}
1, & \text{if } c_t, \hat{c}_t \leq \epsilon \\
2 \cdot \frac{\min(c_t, \hat{c}_t)}{\max(c_t, \hat{c}_t) + \epsilon} - 1, & \text{otherwise}
\end{cases}
\end{align}
where $\epsilon$ is a small positive constant to keep the result from meaningless. 
The value of $\mathcal{R}(c_t, \hat{c}_t)$ lies within $[-1, 1]$ and monotonically increases with $\frac{\min(c_t, \hat{c}_t)}{\max(c_t, \hat{c}_t) + \epsilon}$.

Thus, the more consistent the service node’s information $c_t$ and the AP’s observed information $\hat{c}_t$, the higher the metric; conversely, greater discrepancies result in a lower score.

It is important to emphasize that, unlike traditional AoII (Age of Incorrect Information) metrics \cite{RN2, RN34}, $\mathcal{R}(c_t, \hat{c}_t)$ positively rewards synchronization trends (improving consistency) and negatively evaluates diverging trends (increasing inconsistency).

Beyond content synchronization, ensuring information timeliness is equally critical, given that client task arrivals are random.
The conventional metric is AoI, as in \eqref{eq:aoi}.
The AoI at time t is defined as:
\begin{align}\label{eq:aoi}
\Delta(t) = t - \max_{i: t_{i} \leq t} t_{i},
\end{align}
where $t_{i}$ denotes the reception time of the $i^{th}$ successfully received update; $\Delta(t)$ captures the time elapsed since the latest successfully received update by time $t$. 
To minimize unnecessary updates, a desirable approach is to align information updates as closely as possible with the moments just preceding client task arrivals, a concept captured by QAoI \cite{RN1437} (as in \eqref{eq:aoq}).
Let the query instants be denoted as $t_{q,1}, t_{q,2}, \ldots$. Then, the QAoI for the $i^{th}$ query is defined as:
\begin{align}\label{eq:aoq}
    \Upsilon_i = \Delta(t_{q,i}).
\end{align}

That is, QAoI $\Upsilon_i$ samples the AoI at each query arrival time $t_{q,i}$.
Accordingly, to restrict its size range, we introduce a normalized QAoI adjustment function:
\begin{align}
\phi(\Upsilon_i) = 1 - e^{-k  \Upsilon_i},
\end{align}
where $k$ is a positive constant (in our experiments, we choose $k$ = 0.02), and $\phi(\Upsilon_i) \in [0, 1)$. 

However, in periods without client arrivals, the system (or server) tends to remain stable. In these cases, larger $\phi(\Upsilon_i)$ (triggered by the latest query) are preferable to avoid repeated updates. 
The interpretation of $\delta(t)$ is as follows:
\begin{align}\label{eq:metric}
    \delta(t) = \phi(\Upsilon_i) \cdot \mathcal{R}(c_t, \hat{c}_t)
\end{align}
By \eqref{eq:metric}, we have the following observations:
\begin{itemize}
    \item When the service information discrepancy is large ($\mathcal{R}(c_t, \hat{c}_t) \to -1)$, a larger $\phi(\Upsilon_i)$ amplifies the negative signal, prompting more frequent updates.
    \item When the service information is closely synchronized ($\mathcal{R}(c_t, \hat{c}_t) \to 1$), a larger $\phi(\Upsilon_i)$ amplifies the positive signal, encouraging reduced update frequency and promoting communication efficiency.
\end{itemize}

\subsection{Problem Formulation}
In order to balance information freshness, content consistency, and update overhead, based on the effectiveness metric $\delta(t)$, we formulate the system optimization objective as maximizing the long-term average system performance over $T$ discrete time slots:
\begin{align}
    \max \ \liminf_{T \to \infty} \frac{1}{T} \mathbb{E} \left[ \sum_{t=0}^{T-1} \left( \delta(t) - c_u a_t \right) \right],
\end{align}
where $a_t\in\{0,1\}$ triggers an update costing $c_u>0$.

\section{MDP Formulation and Solutions}\label{sect:mdp}
We model the service information update and task forwarding system as a discrete-time MDP, characterized by the tuple $\langle\mathcal{S},\mathcal{A},\mathcal{P},\mathcal{R},\gamma\rangle$.

\subsection{State Space $\mathcal{S}$}
To accurately capture the dynamic behavior of the system and support effective decision-making, we carefully design the state space to incorporate critical aspects of resource availability, information freshness, and client access patterns. 
    
At each time step t, the system state is defined as:
    \begin{align}
        s_t = \left( c_t,\hat{c}_t,\Upsilon_i, \tau_t, \iota_t\ \right),
    \end{align}
where $\iota_t \in \mathbb{N}$ is the time elapsed since the last client arrival.

This design is motivated by the following considerations. The pair ($c_t,\hat{c}_t$) provides a comprehensive view of the resource state and the potential mismatch between actual and observed conditions, which is critical for task forwarding decisions. The inclusion of $\Upsilon_i$ allows the agent to assess the freshness of the service information precisely at the moment when it is actually needed, rather than passively tracking the information age over time. 

Moreover, by leveraging the subsequent effectiveness metric $\delta(t)$, the agent can jointly consider both the timeliness and the content consistency of the service information, thus enabling more intelligent update decisions that maintain information quality while significantly reducing unnecessary update frequency.

To reflect the serialized transmission nature observed in real systems, we introduce a scalar variable \( \tau_t \in \{0, 1, 2, \dots, \tau_{max}\} \) representing the remaining delay before the in-transit update arrives at the AP. This variable governs the update timing as follows:
\begin{itemize}
    \item If \( \tau_t > 0 \), the update is still in transit. We decrement \( \tau_{t+1} \leftarrow \tau_t - 1 \). When \( \tau_t = 0 \), the AP updates \( \hat{c}_{t+1} \) with the content of the arriving update.
    \item If \( \tau_t = 0 \) and the decision \( a_t = 1 \), the server initiates a new update carrying the current \( c_t \), and sets \( \tau_{t+1} \leftarrow \tau_{max} \).
\end{itemize}

Finally, $\iota_t$ captures the temporal dynamics of client access behavior, enabling the system to anticipate future load conditions based on recent access intervals. 
Thus, the state space is: $\mathcal{S} = \mathbb{R}_{\geq 0} \times \{0,1,\ldots,C_{\text{MAX}}\}^2 \times \tau_{max} \times \mathbb{N}$.

\subsection{Action Space $\mathcal{A}$}
At each time step $t$, the agent selects an action $a_t$ from the discrete action space:
\begin{align}
a_t \in \mathcal{A} = \{0, 1\}
\end{align}
where:\begin{itemize}
    \item $a_t = 0$ means no service information update is performed;
    \item $a_t = 1$ means a service information update is triggered, incurring a constant cost $c_u$.
\end{itemize}

\subsection{Transition Dynamics $\mathcal{P}$}
Given $s_t$ and $a_t$, the conditional distribution of $s_{t+1}$ obeys:
\begin{itemize}
\item \emph{Server Availability}
    \begin{align}
      c_{t+1} =
        \begin{cases}
            \max(0, c_t - 1), & \text{if a request is forwarded} \\
                              & \quad \text{and accepted}, \\
            \min(C_{\text{MAX}}, c_t + 1), & \text{if a task finished}, \\
            c_t, & \text{otherwise}.
        \end{cases}
    \end{align}
\item \emph{Observed Capacity (on AP)} 

\begin{align}
\hat{c}_{t+1}=
    \begin{cases}
        c^{\text{received}}, & \text{if  } \tau_t = 1, \\
        \hat{c}_t, & \text{otherwise}
    \end{cases},
\end{align}
where $c^{\text{received}}$ is the newly received service information at AP.
\begin{align}
\tau_{t+1} =
\begin{cases}
    \tau_{max}, & \text{if } \tau_t = 0 \text{ and } a_t = 1 \\
    \tau_t - 1, & \text{if } \tau_t > 0 \\
    0, & \text{otherwise}
\end{cases}
\end{align}

\item \emph{QAoI} $\Upsilon_{i+1}=\Delta(t_{q,i+1})$ whenever a request arrives; else $\Upsilon_{i+1}=0$.
\item \emph{Inter-arrival timer} $\iota_{t+1}=0$ on an arrival request, otherwise $\iota_t+1$.
\end{itemize}

\subsection{Reward $\mathcal{R}$ and Objective}
The immediate reward at time $t$ is defined as:
\begin{align}
r_t = \delta(t) - c_u \times a_t,
\end{align}
with $\delta(t)$ given by~\eqref{eq:metric}.

The overall optimization objective is to find a stationary policy $\pi:\mathcal{S}\!\to\!\mathcal{A}$ maximizing the expected discounted return:
\begin{align}
J(\pi)=\mathbb{E}_{\pi}\Bigl[\sum_{t=0}^{\infty}\gamma^{\,t}\,r_t\Bigr],\qquad 0<\gamma<1.
\end{align}

Consequently, the joint evolution of the system state $s_t$ under the control actions $a_t$ forms a discrete-time MDP.

\subsection{Solutions with PPO Methods}

We learn the update decisions with PPO via Stable-Baselines3\footnote{https://stable-baselines3.readthedocs.io}. A small actor–critic network takes the state ($c_t,\hat c_t,\Upsilon_i,\iota_t$) and outputs the probability of “update” versus “no-update”. After collecting short trajectories and computing the advantage signal, we adjust the actor by maximizing the \emph{clipped} surrogate objective, with a small entropy bonus for exploration.

\noindent \textit{Remark on $\tau_{t}$:}
In practical CFN deployments, the transmission delay $\tau$ may be stochastic and vary over time due to changing network conditions. Moreover, the maximum delay $\tau_{\max}$ is often unknown or difficult to estimate a priori, making it challenging to explicitly model the full delay process within the system state.

To address this challenge, we adopt a model-free reinforcement learning framework. Specifically, we employ the Proximal Policy Optimization (PPO) algorithm, which enables the agent to learn an effective update strategy through direct interaction with the environment—without requiring an explicit model of delay dynamics or a bounded state space related to delay variables such as $\tau_t$.

Since PPO is a model-free reinforcement learning method, the agent can learn to exploit this discrepancy without explicitly modeling the stochastic delay. Empirical results validate that the learned policy achieves substantial reductions in update frequency while preserving or even improving decision accuracy.

\section{Theoretical Analysis}\label{sect:analysis}

This section establishes the analytical foundations of the proposed method, proving that the optimization problem is well posed and that a stationary optimal policy exists.  We also show that the reward landscape satisfies the regularity conditions required by policy-gradient methods such as PPO.

\subsection{Fundamental Properties of the \metricname~ Metric}

\begin{proposition}[Boundedness and Normalization]\label{prop:bounded}
For all $t\in\mathbb{N}$,
$
-1\le\delta(t)
     =\phi(\Upsilon_i)\,\mathcal{R}(c_t,\hat{c}_t)
     \le 1.
$
\end{proposition}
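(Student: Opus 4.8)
The plan is to bound the two factors of $\delta(t)=\phi(\Upsilon_i)\,\mathcal{R}(c_t,\hat{c}_t)$ separately and then combine them, exploiting the fact that the adjustment factor $\phi(\Upsilon_i)$ is nonnegative, so that multiplication preserves the direction of the inequalities governing $\mathcal{R}(c_t,\hat{c}_t)$. Since the proposition only asks for the weak bounds, no tightness or attainability argument is needed.

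First I would pin down the range of $\phi$. Because the QAoI satisfies $\Upsilon_i\ge 0$ and $k>0$, the exponent obeys $-k\Upsilon_i\le 0$, hence $e^{-k\Upsilon_i}\in(0,1]$ and therefore $\phi(\Upsilon_i)=1-e^{-k\Upsilon_i}\in[0,1)$. The two consequences I actually need downstream are $\phi(\Upsilon_i)\ge 0$ and $\phi(\Upsilon_i)<1$.

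Second I would confirm $\mathcal{R}(c_t,\hat{c}_t)\in[-1,1]$ via the case split in its definition (this was already asserted in Section~\ref{sect:metric}, but I would verify it for completeness). In the degenerate case $c_t,\hat{c}_t\le\epsilon$ we have $\mathcal{R}=1$, which lies in the range. In the generic case, writing $\rho:=\frac{\min(c_t,\hat{c}_t)}{\max(c_t,\hat{c}_t)+\epsilon}$, the numerator is nonnegative and strictly smaller than the denominator (owing to the additive $\epsilon>0$), so $\rho\in[0,1)$ and thus $\mathcal{R}=2\rho-1\in[-1,1)$. In either case $-1\le\mathcal{R}(c_t,\hat{c}_t)\le 1$.

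Finally, I would combine the two: since $\phi(\Upsilon_i)\ge 0$, multiplying the chain $-1\le\mathcal{R}(c_t,\hat{c}_t)\le 1$ through by $\phi(\Upsilon_i)$ yields $-\phi(\Upsilon_i)\le\delta(t)\le\phi(\Upsilon_i)$, and invoking $\phi(\Upsilon_i)<1$ gives $-1<\delta(t)<1$, which immediately implies the claimed $-1\le\delta(t)\le 1$. The statement is elementary, so there is no substantive obstacle; the only point deserving attention is that $\mathcal{R}$ may be negative, so the multiplication step must rely on the \emph{sign} of $\phi(\Upsilon_i)$ rather than merely on $|\mathcal{R}(c_t,\hat{c}_t)|\le 1$, in order to obtain both the upper and lower bounds correctly.
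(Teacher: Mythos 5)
Your proof is correct and follows essentially the same route as the paper's: bound $\phi(\Upsilon_i)\in[0,1)$ and $\mathcal{R}(c_t,\hat{c}_t)\in[-1,1]$ separately, then multiply. The only difference is that you spell out the verification of the range of $\mathcal{R}$ (which the paper dismisses as ``by construction'') and make explicit that the nonnegativity of $\phi$ is what lets the multiplication preserve both sides of the inequality --- reasonable added detail, but not a different argument.
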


\begin{proof}
$\mathcal{R}(c_t,\hat{c}_t)\in[-1,1]$ by construction;  
$\phi(\Upsilon_i) = 1-e^{-k\Upsilon_i} \in [0,1)$ for $k>0$ and $\Upsilon_i\ge0$.  
Therefore their product remains in $[-1,1)$.
\end{proof}

\begin{proposition}[Monotonicity]\label{prop:mono}
Fix $c_t,\hat{c}_t$.  Then $\delta(t)$ is strictly increasing in $\Upsilon_i$ if $\mathcal{R}(c_t,\hat{c}_t)>0$, and strictly decreasing if $\mathcal{R}(c_t,\hat{c}_t)<0$.
\end{proposition}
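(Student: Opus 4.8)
The plan is to exploit the product structure of $\delta(t)=\phi(\Upsilon_i)\,\mathcal{R}(c_t,\hat{c}_t)$. Since $c_t$ and $\hat{c}_t$ are held fixed, the factor $\mathcal{R}(c_t,\hat{c}_t)$ is a constant, which I would denote $R$, so that $\delta(t)$ collapses to the single-variable function $g(\Upsilon_i)=R\,\phi(\Upsilon_i)$. The entire claim then reduces to the monotonicity of $\phi$ combined with the elementary sign rule for multiplication by a constant.

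First I would establish that $\phi(\Upsilon)=1-e^{-k\Upsilon}$ is \emph{strictly} increasing on $[0,\infty)$. The quickest route is the derivative $\phi'(\Upsilon)=k e^{-k\Upsilon}>0$ for $k>0$. I would, however, prefer the equivalent direct-inequality form: for any $0\le\Upsilon<\Upsilon'$, strict monotonicity of the exponential gives $e^{-k\Upsilon'}<e^{-k\Upsilon}$, whence $\phi(\Upsilon)<\phi(\Upsilon')$. This form is safer because $\Upsilon_i$ is the slot-sampled QAoI and may be integer-valued, in which case a derivative argument would implicitly invoke a continuous extension, whereas the inequality form applies verbatim to discrete arguments.

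Next I would multiply through by $R$. If $R>0$, multiplying the strict inequality $\phi(\Upsilon)<\phi(\Upsilon')$ by $R$ preserves its direction, yielding $g(\Upsilon)<g(\Upsilon')$, i.e. $\delta(t)$ strictly increasing in $\Upsilon_i$. If $R<0$, the multiplication reverses the inequality to $g(\Upsilon)>g(\Upsilon')$, i.e. strictly decreasing. Together these two cases are exactly the statement, and Proposition~\ref{prop:bounded} already supplies the range $\phi\in[0,1)$ that underlies the well-definedness of $g$.

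Honestly there is no hard step here; the result is a one-line corollary of the strict monotonicity of $\phi$ and the sign rule. The only two points demanding care are, first, that the proposition deliberately omits the boundary case $R=0$, for which $\delta\equiv 0$ is constant and is neither increasing nor decreasing—so the two-case formulation is exhaustive over the stated hypotheses; and second, that strictness must be preserved throughout, which hinges precisely on $\phi$ being strictly rather than weakly monotone, i.e. on $k>0$ together with $\Upsilon\neq\Upsilon'$. I would flag these two caveats explicitly so the statement's scope is unambiguous.
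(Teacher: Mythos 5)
Your proof is correct and takes essentially the same approach as the paper's: both reduce the claim to the strict monotonicity of $\phi$ combined with the sign of the constant factor $\mathcal{R}(c_t,\hat{c}_t)$. Your only departure is replacing the paper's derivative computation $\partial\phi/\partial\Upsilon_i = k e^{-k\Upsilon_i}>0$ with a direct finite-difference inequality, a minor refinement that also covers integer-valued $\Upsilon_i$ without appealing to a continuous extension.
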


\begin{proof}
$\partial\phi/\partial\Upsilon_i=k e^{-k\Upsilon_i}>0$.  
Hence the sign of $\partial\delta(t)/\partial\Upsilon_i$ equals the sign of $\mathcal{R}(c_t,\hat{c}_t)$.
\end{proof}
\noindent \textit{Remark.} If $\mathcal{R}(c_t,\hat{c}_t) = 0$, then $\delta(t)=0$ for any $\phi(\Upsilon_i)$,  reflecting that no update adds value when the server and AP information coincide.

\subsection{Regularity of the Controlled Process}

\begin{assumption}\label{ass:ergodic}
\begin{enumerate}[label=(\alph*)]
\item The server capacity $C_{\max}$ is finite.
\item Inter-arrival times are i.i.d.\ with finite mean.
\item Task sizes $w_j$ are bounded above by $r_{\max}<\infty$.
\end{enumerate}
\end{assumption}

\begin{lemma}[Positive Recurrence]\label{lem:recurrence}
Given Assumption~\ref{ass:ergodic}, the Markov chain $\{s_t\}_{t\ge0}$ induced by any stationary policy $\pi$ is
\emph{irreducible}, \emph{aperiodic}, and \emph{positive recurrent}.
\end{lemma}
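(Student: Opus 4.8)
\emph{Strategy.} The plan is to view $\{s_t\}$ as a Markov chain whose state factors into a \emph{finite} block $(c_t,\hat c_t,\tau_t)\in\{0,\dots,C_{\max}\}^2\times\{0,\dots,\tau_{\max}\}$ and the single unbounded, renewal-driven coordinate $\iota_t\in\mathbb N$; the QAoI entry $\Upsilon_i$ is reset to $0$ at every non-arrival slot, so it contributes no independent unbounded dynamics and can be subsumed into the analysis of the arrival clock. Finiteness of the first block is immediate from Assumption~\ref{ass:ergodic}(a), while Assumption~\ref{ass:ergodic}(c) ensures each busy thread is released within $\lceil r_{\max}/V_s\rceil$ slots, so $c_t$ relaxes toward $C_{\max}$ in a uniformly bounded number of slots absent new admissions. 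I would establish the three properties in the order irreducibility, aperiodicity, positive recurrence, since the first two isolate a single communicating class on which Foster's drift criterion can then be applied.

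\emph{Irreducibility and aperiodicity.} I would fix a reference state $s^\star=(C_{\max},C_{\max},0,0,0)$ (idle server, synchronized AP, nothing in transit, fresh arrival) and show it communicates with every state the chain can actually occupy under $\pi$. Reachability \emph{into} $s^\star$ follows by letting a finite arrival-free run release all threads (bounded by Assumption~\ref{ass:ergodic}(c)) and deliver any in-transit update within $\tau_{\max}$ slots so that $\hat c=c=C_{\max}$, then using the positive one-step arrival probability to reset $\iota$; reachability \emph{out of} $s^\star$ is obtained by prescribing admissible arrival/completion sequences, each of positive probability under Assumption~\ref{ass:ergodic}(b). For aperiodicity I would exhibit return excursions to $s^\star$ of coprime lengths—available because the quantized delay $\tilde\tau$ and the service completions can lengthen a return path by one slot—so that the $\gcd$ of the return times is $1$; when the inter-arrival law places mass on a unit slot, a direct self-loop at a state with $\hat c=0$ and $c=C_{\max}$ already forces period one.

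\emph{Positive recurrence.} Here I would invoke the Foster--Lyapunov drift criterion. On the finite block the drift is trivially bounded, so matters reduce to the arrival clock $\iota_t$, which is exactly the backward recurrence time (age) of the client-arrival renewal process. Assumption~\ref{ass:ergodic}(b) gives a finite mean inter-arrival time, equivalently $\sum_{k\ge0}\mathbb P(X>k)<\infty$ for a generic inter-arrival $X$, which is the classical condition for the age chain to be positive recurrent. Taking the Lyapunov function $V(s)=\mathbb E[\,\text{time to the next arrival}\mid\iota\,]$, the expected one-step change is negative away from arrivals and suffers only a bounded jump at an arrival, so the drift is uniformly $\le-\tfrac12$ outside a finite set $\{\iota\le N\}\times(\text{finite block})$, on which $V$ is bounded. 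Foster's theorem then yields positive recurrence of the whole chain.

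\emph{Main obstacle.} The delicate step is the drift verification for the unbounded coordinate: a merely finite-mean renewal age need not have a uniformly negative \emph{one-step} drift in the raw variable $\iota$ (its hazard may dip), so I would either work with the expected-residual-life potential above or bound a \emph{multi-step} drift over a window of length $N$, invoking the finite-mean assumption to control $\mathbb E[\text{time to next arrival}\mid\iota]$; equivalently one can run a direct regeneration argument showing the expected return time to $s^\star$ is finite. A secondary subtlety, which I would flag explicitly, is that for policies updating rarely the underlying age-of-information can itself grow unboundedly; this does not obstruct recurrence of the variables actually appearing in $s_t$, since $\Upsilon_i$ enters only through its reset value between arrivals, and the lemma is to be read on the unique recurrent class selected by $\pi$.
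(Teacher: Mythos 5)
Your proposal is correct and follows the same overall skeleton as the paper's proof idea (reachability of a reference all-idle state gives irreducibility, a positive-probability short cycle gives aperiodicity, and a finite expected return time gives positive recurrence), but it diverges meaningfully on the step where the paper's argument is thinnest. The paper's Step 3 simply asserts that the expected hitting time of $(C_{\max},C_{\max},0,0,0)$ is finite because ``arrivals are light-tailed, service times are bounded, and the delay distribution is exponential''; note that Assumption~\ref{ass:ergodic}(b) grants only a finite mean, not light tails. You instead isolate the genuinely unbounded coordinate $\iota_t$ as the backward recurrence time of the arrival renewal process and discharge it with a Foster--Lyapunov drift argument based on the expected-residual-life potential (with the multi-step drift or regeneration fallback to handle a non-monotone hazard), for which finite mean inter-arrival time is exactly the right hypothesis --- so your route proves the lemma under the assumptions actually stated. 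You also correctly flag a point the paper's proof skips: under a stationary policy that rarely or never updates, the AoI and hence the sampled values of $\Upsilon_i$ can diverge, so the lemma as phrased for ``any stationary policy'' needs either the restriction to the recurrent class selected by $\pi$ or an argument that the coordinate $\Upsilon_i$ stays tight; making that caveat explicit (as you do) is necessary for a complete proof. In short, the paper's version is a compact sketch; yours is the more rigorous elaboration one would want, at the cost of invoking Foster's criterion and a slightly more delicate drift verification.
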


\begin{proof}[Proof idea]
\emph{Step 1 (finite drift).}  
Because the server capacity is bounded ($0\!\le c_t,\hat c_t\le C_{\max}$) and every task size satisfies $w_j\!\le w_{\max}\!<\!\infty$, the queue of unfinished tasks can never grow without bound: Within at most $C_{\max}$ service completions the system must return to a state with $c_t=C_{\max}$.

\emph{Step 2 (reachability).}  
The link delay $\tau\sim\mathrm{Exp}(\lambda)$ has full support on $\mathbb{N}$, hence there is a strictly positive probability of observing \emph{every} delay length.  
Combining this with the renewal (i.i.d.) arrival process implies that, from any state, there exists a sequence of events—arrival, possible forwarding, service completion, and update delay—leading to any other state in finitely many steps with non-zero probability.  
Thus the chain is irreducible.  Because some self-transition has positive probability, it is also aperiodic.

\emph{Step 3 (return time).}
Starting from any state, the expected time to hit the ``all-idle'' state $(c_t=C_{\max},\hat c_t=C_{\max},\Upsilon_i=0,\tau_t=0, \iota_t=0)$ is finite:  
Arrivals are light-tailed, service times are bounded, and the delay distribution is exponential.  
Therefore the chain’s expected return time to that state is finite, establishing positive recurrence.
\end{proof}

\subsection{Existence of a Stationary Optimal Policy}

\begin{theorem}\label{thm:optimal}
Given Assumption~\ref{ass:ergodic} and Lemma~\ref{lem:recurrence}, there exists a deterministic stationary policy $\pi^\star$ that maximize the expected discounted return
$
J(\pi)=\mathbb{E}_{\pi}\!\bigl[\sum_{t=0}^{\infty}\gamma^{\,t}r_t\bigr]
$
for every $\gamma\in(0,1)$.
\end{theorem}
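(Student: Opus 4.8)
The plan is to invoke the standard existence theorem for discounted-reward Markov Decision Processes, verifying that our problem satisfies its hypotheses. The classical result (see Puterman's dynamic programming theory) guarantees that a deterministic stationary optimal policy exists whenever the reward is bounded, the discount factor lies in $(0,1)$, and the action set is finite. I would therefore structure the proof as a checklist against these three conditions, leaning on results already established in the excerpt.

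First I would establish that the per-step reward is uniformly bounded. By Proposition~\ref{prop:bounded} we have $\delta(t)\in[-1,1]$, and the update cost contributes at most $c_u$ in magnitude since $a_t\in\{0,1\}$; hence $|r_t|=|\delta(t)-c_u a_t|\le 1+c_u<\infty$ for every $t$. This uniform bound is the crucial ingredient that makes the discounted return $J(\pi)=\mathbb{E}_\pi[\sum_{t=0}^\infty\gamma^t r_t]$ well defined and finite for all $\gamma\in(0,1)$, since the geometric series $\sum_t\gamma^t(1+c_u)$ converges. Second I would note that the action space $\mathcal{A}=\{0,1\}$ is finite (indeed it is compact, which is the condition needed for the maximum to be attained), so the supremum over admissible actions in the Bellman optimality equation is a maximum. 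Third I would appeal to the regularity of the underlying dynamics: Lemma~\ref{lem:recurrence} shows that under Assumption~\ref{ass:ergodic} the controlled chain is irreducible, aperiodic, and positive recurrent for every stationary policy, which guarantees that the transition kernel is well behaved and that the state process does not escape to infinity.

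With these three facts in hand, the existence of $\pi^\star$ follows from the contraction-mapping argument underlying discounted MDP theory: the Bellman optimality operator $(\mathcal{T}V)(s)=\max_{a\in\mathcal{A}}\{r(s,a)+\gamma\sum_{s'}\mathcal{P}(s'\mid s,a)V(s')\}$ is a $\gamma$-contraction on the Banach space of bounded functions under the supremum norm. By the Banach fixed-point theorem it admits a unique fixed point $V^\star$, and because $\mathcal{A}$ is finite the maximizing action at each state is attained; selecting such an action pointwise yields a deterministic stationary policy $\pi^\star$ that is optimal. I would state this conclusion and cite the standard reference rather than reproducing the fixed-point proof.

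The main obstacle I anticipate is not the finite-action discounted case itself, which is textbook, but rather making sure the state space is handled cleanly: $\mathcal{S}$ contains the continuous or countably infinite components $\Upsilon_i\in\mathbb{R}_{\ge0}$ and $\iota_t\in\mathbb{N}$, so the naive tabular argument does not literally apply. The careful point is that boundedness of the reward together with the contraction property holds on the space of bounded measurable value functions regardless of whether the state space is finite, so the fixed-point argument goes through on a general Borel state space. Here Lemma~\ref{lem:recurrence} does real work: positive recurrence ensures the process returns to a distinguished state in finite expected time, so the value function is well defined and the measurable-selection conditions (the reward is bounded and, via the finite action set, the maximization is trivially over a compact set) are met. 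I would therefore emphasize that the combination of bounded reward, finite action set, and the regularity from Lemma~\ref{lem:recurrence} is precisely what licenses the measurable-selection theorem that delivers a \emph{deterministic} stationary optimizer rather than merely a randomized or history-dependent one.
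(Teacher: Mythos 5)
Your proposal is correct and follows essentially the same route as the paper: verify that the reward is uniformly bounded (via Proposition~\ref{prop:bounded} plus the bounded cost term $c_u a_t$), note the finite action set, and invoke the standard existence theorem for discounted MDPs (Puterman), with your contraction-mapping discussion simply unpacking what that citation contains. One small remark: for the discounted criterion, bounded rewards and $\gamma<1$ already suffice, so the positive recurrence from Lemma~\ref{lem:recurrence} is not actually doing load-bearing work here (it would matter for an average-reward objective); both you and the paper lean on it more than necessary, but this is a harmless over-assumption rather than a gap.
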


\begin{proof}
Lemma~\ref{lem:recurrence} guarantees that the reward process is bounded (Prop.~\ref{prop:bounded}) and the state space countable but $\sigma$-compact.  
Standard results in Markov Decision Theory (e.g. Puterman, 1994, Thm.~6.2.10 \cite{puterman2014markov}) then imply the existence of an optimal deterministic stationary policy. \qedhere
\end{proof}

\section{Experiments}\label{sect:experiments}
\subsection{Simulation Environment}
The experiments were conducted on a MacBook Pro equipped with an Apple M4 Pro processor and 24 GB of RAM. To better align with practical scenarios, we implement a simulation environment based on ns-3 version 3.43\footnote{https://www.nsnam.org}, ns3-gym\cite{ns3gym}, and the Stable-Baselines3 library, which runs inside a Docker container to facilitate fast and portable deployment. Inside the container, gcc 11.4.0 and Python 3.10.12 were installed to support the simulation and reinforcement learning frameworks. \figurename~\ref{fig:simulation-enviroment} illustrates the architecture of the environment.

\begin{figure}[htbp]
    \centering
    \includegraphics[width=0.9\linewidth]{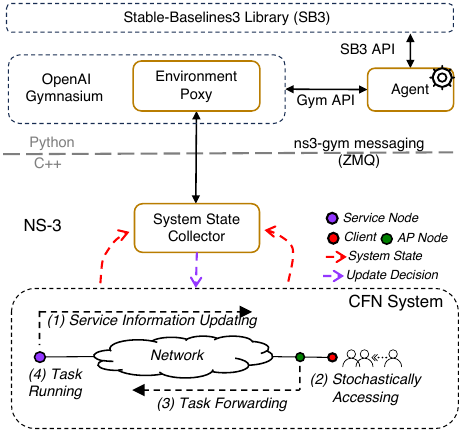}
    \caption{The implemented simulation environment}
    \label{fig:simulation-enviroment}
\end{figure}

At the bottom layer, the CFN System is modeled in ns-3, consisting of three types of nodes: A service node, an AP node, and a client. The client stochastically access the network (step 2), triggering task forwarding (step 3) toward service nodes, where tasks are executed (step 4). Meanwhile, service information is updated according to the Agent decision (step 1) within the network.

The System State Collector gathers real-time system state from the CFN system and transmits it to the Environment Proxy via the ns3-gym messaging interface (ZMQ protocol). The Environment Proxy, built on OpenAI Gymnasium\footnote{https://gymnasium.farama.org}, acts as a bridge between the simulation and the reinforcement learning agent.

At the top layer, an Agent implemented with Stable-Baselines3 (SB3) interacts with the environment through the Gym API, making update decisions based on the observed system state. These decisions are fed back to ns-3 to dynamically control the system’s behavior.

\subsection{Experiment Setup}
\subsubsection{Baselines}
We compare our proposed method against three representative baseline strategies, each aiming to minimize a different notion of information freshness: AoI-minimized update, AoP-minimized update~\cite{RN1181}, and QAoI-minimized update~\cite{RN1437}.

To ensure a fair comparison, we adopt a consistent cost model across all methods: Every time a service information update is performed, a fixed update cost $c_u$ is incurred. At the same time, we implemented their PPO version to deal with dynamic environmental changes.
\subsubsection{Evaluation Metrics} 
To assess the efficiency of information updates and their impact on resource consumption, we measure the update rate, which quantifies the frequency of service information updates relative to the total number of time slots:
\begin{align*}
\text{Update Rate} = \frac{\text{Total Updates}}{\text{Total Time Slots}}.
\end{align*}

We also evaluate the effectiveness of service information in supporting task-forwarding decisions. A key performance metric is the decision accuracy, defined as the proportion of correct decisions (both true positives, TP, and true negatives, TN) over the total number of client requests:
\begin{align*}
    \text{Accuracy} = \frac{\text{TP} + \text{TN}}{\text{Total Requests}}.
\end{align*}

These two metrics provide a comprehensive evaluation of both the quality of the decision and the efficiency of the communication of the different update strategies.
\subsubsection{Simulation Settings}
To emulate dynamic client access behaviors, we consider two distinct access models. The first is a \textit{deterministic periodic} access pattern, which reflects scenarios such as satellite communication systems~\cite{RN1437}, where client requests arrive at regular intervals.
Specifically, we vary the request interval across the set \{0.0167, 0.02, 0.025, \dots, 0.1\} seconds, corresponding to arrival rates of 60, 50, 40, \dots, 10 requests per second. 
The second is a uniformly distributed random access model, where the inter-arrival times of client requests are sampled from a uniform distribution. Here, the \textit{minimum} interval again takes values in \{0.0167, 0.02, 0.025, \dots, 0.1\}, and the maximum is set to 1.5$\times$\textit{minimum}. These two models allow us to evaluate the robustness and adaptability of the update strategies under both structured and stochastic client behaviors. We simulate an IEEE 802.11a WLAN in infrastructure mode when the client accesses the AP node.

For the task size $w_j$ configuration, we adopt a uniform distribution model, where each task size is randomly sampled from a uniform distribution with parameters: $w_j \sim \mathcal{U}(40, 45)$.
It is important to note that this work does not focus on queuing or scheduling dynamics on the server side. Therefore, variations in task size are not the primary factor under investigation and are set to a narrow range to maintain consistency while introducing slight randomness.

To introduce dynamic variability in the delay in sending the information from the service, we model the delay as the sum of a fixed-base delay and a stochastic component. Specifically, based on our network topology, each forward delay is composed of a constant of $4$ time slots plus a random delay sampled from an exponential distribution with rate parameter $\lambda$. 
This setup reflects realistic network conditions where the baseline transmission time is predictable, but random fluctuations may occur due to varying network load or link conditions. The actual duration corresponding to one time slot can be flexibly adjusted according to practical system requirements, such as milliseconds for edge computing scenarios or seconds for satellite communication systems. To align with ns-3, we set one time slot equal to one millisecond.

We implement PPO using the Stable‑Baselines3 library’s built‑in PPO class, \tablename~\ref{tbl:rl-parameters} shows the parameters used in the experiments. 
We train our PPO agent with episodes of 2000 time‐steps each, ensuring that the full pattern of client request arrivals is observed before an update. Gradients are accumulated over two episodes and network parameters are updated every four episodes. Training runs for a total of 5$\times 10^{\text{5}}$ steps. All other parameters remain at the Stable‑Baselines3 defaults. During testing, we evaluate over 100 seconds and compute the average value of each metric.
\begin{table}[htbp]
\caption{Parameters for PPO training}
\centering
\begin{tabular}{c|c}
    \hline
    \textbf{Parameter} & \textbf{Value}\\
    \hline
    \hline
    information update cost $c_u$&  0.5\\
    \hline
    discounted factor $\gamma$&  0.995\\
    \hline
    learning rate $\eta$& 3$\times 10^{\text{-4}}$  \\
    \hline
    total training steps & 5$\times 10^{\text{5}}$ \\
    \hline
    episode size& 2000 \\
     \hline
    mini-batch size $\Xi$& 4000 \\
    \hline
\end{tabular}
\label{tbl:rl-parameters}
\end{table}
\subsection{Results}
\subsubsection{Efficiency Evaluation}
To validate the efficiency of the AVA metric, we varied the server’s maximum capacity $C_{\text{MAX}}$ and compared the update rates of the four methods, as shown in \figurename~\ref{fig:update-deterministic} and \figurename~\ref{fig:update-uniform}. In the scenario with a uniform distribution user access pattern ($min = 0.05, max = 0.075$, \figurename~\ref{fig:update-i}), corresponding to a request rate of approximately $13–20$ times per second, the proposed AVA method demonstrated significant efficiency gains. When compared to the baseline AoP, which required an average of 34.09 updates/s, AVA reduced this to only 2.05 updates/s, achieving a 93.84\% reduction in communication overhead. In comparison, AoI and QAoI yielded 16.56 and 2.49 updates/s respectively, indicating that AVA also performs competitively or better than other baseline methods under the same workload. Overall, our approach clearly outperforms the other three in reducing update rate.

Specifically, under deterministic periodic access (Figs. \ref{fig:update-a} - \ref{fig:update-f}), both AVA and QAoI exhibit decreasing update rates as the client inter‑arrival interval increases. For shorter intervals (Figs. \ref{fig:update-a}–\ref{fig:update-d}), AVA consistently yields a lower update rate than QAoI, demonstrating its superior adaptability to high request frequencies. Under a Deterministic user access pattern with an inter-request interval of 0.025 seconds (i.e., 40 requests per second), and with a server concurrency capacity of $C_{\text{MAX}} = 4$, the AVA method dramatically reduced update frequency. While the baseline method QAoI required an average of 13.56 updates per second, AVA achieved the same task with only 1.08 updates per second, resulting in a 92.04\% reduction in update cost.

Moreover, as the server’s capacity $C_{\text{MAX}}$ grows, the AVA-driven update rate declines steadily, indicating that AVA effectively captures the interplay between server capability and client behaviors.
Under the more stochastic uniform access model (\figurename~\ref{fig:update-uniform}), the advantage of AVA is similarly pronounced. In the scenario with a uniformly distributed user request pattern ($min = 0.0333 s, max = 0,05 s$, \figurename~\ref{fig:acc-j}) and a moderate server capacity of $C_{\text{MAX}} = 2$, the proposed AVA method significantly reduced the update rate. Compared to the baseline AoP, which required 67.62 updates per second, AVA reduced this to 1.12 updates per second, achieving a 98.34\% reduction in communication overhead.

We also observe that when client arrival rates are low, specifically in deterministic scenarios (Figs.~\ref{fig:update-e} and \ref{fig:update-f}) and uniform scenarios (Figs.~\ref{fig:update-k} and \ref{fig:update-l}), AVA and QAoI converge to nearly identical update rates. In these scenarios, the task duration roughly matches or smaller than the request interval, causing the server to process requests serially. In other words, AVA attains a minimal update frequency comparable to QAoI.

\begin{figure}[htbp]
    \centering
    \begin{subfigure}[b]{0.48\linewidth}
        \centering
        \includegraphics[width=\linewidth]{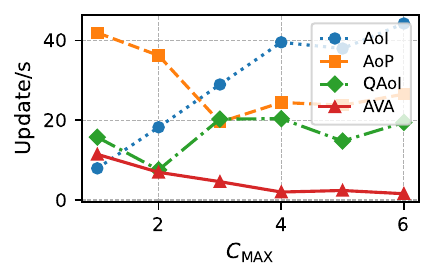}
        \caption{Request interval = 0.0167}
        \label{fig:update-a}
    \end{subfigure}
    \hfill
    \begin{subfigure}[b]{0.48\linewidth}
        \centering
        \includegraphics[width=\linewidth]{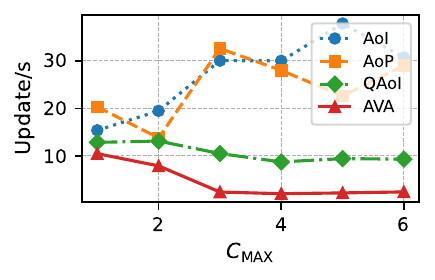}
        \caption{Request interval = 0.02}
        \label{fig:update-b}
    \end{subfigure}
    \begin{subfigure}[b]{0.48\linewidth}
        \centering
        \includegraphics[width=\linewidth]{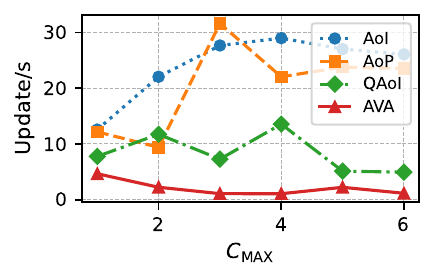}
        \caption{Request interval = 0.025}
        \label{fig:update-c}
    \end{subfigure}
    \hfill
    \begin{subfigure}[b]{0.48\linewidth}
        \centering
        \includegraphics[width=\linewidth]{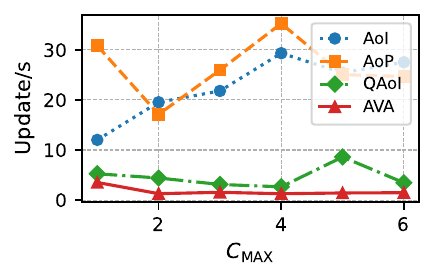}
        \caption{Request interval = 0.0333}
        \label{fig:update-d}
    \end{subfigure}
    \begin{subfigure}[b]{0.48\linewidth}
        \centering
        \includegraphics[width=\linewidth]{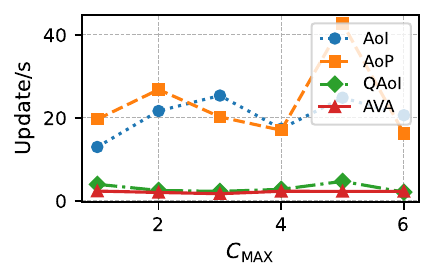}
        \caption{Request interval = 0.05}
        \label{fig:update-e}
    \end{subfigure}
    \hfill
    \begin{subfigure}[b]{0.48\linewidth}
        \centering
        \includegraphics[width=\linewidth]{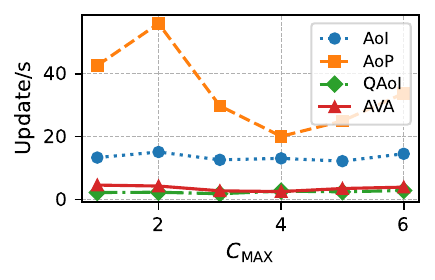}
        \caption{Request interval = 0.1}
        \label{fig:update-f}
    \end{subfigure}
    \caption{Update rate under deterministic distributed user access patterns}
    \label{fig:update-deterministic}
\end{figure}

\begin{figure}[htbp]
    \centering
    \begin{subfigure}[b]{0.48\linewidth}
        \centering
        \includegraphics[width=\linewidth]{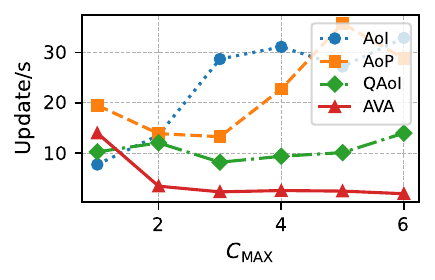}
        \caption{min=0.0167, max=0.025}
        \label{fig:update-g}
    \end{subfigure}
    \hfill
    \begin{subfigure}[b]{0.48\linewidth}
        \centering
        \includegraphics[width=\linewidth]{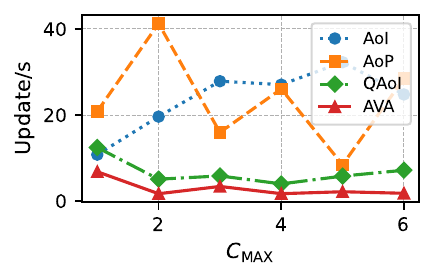}
        \caption{min=0.02, max=0.03}
        \label{fig:update-h}
    \end{subfigure}
    \begin{subfigure}[b]{0.48\linewidth}
        \centering
        \includegraphics[width=\linewidth]{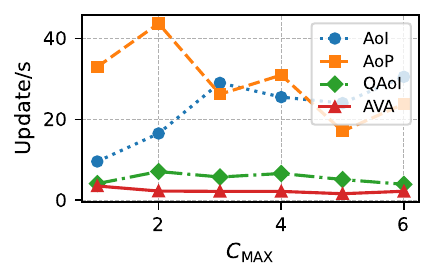}
        \caption{min=0.025, max=0.0375}
        \label{fig:update-i}
    \end{subfigure}
    \hfill
    \begin{subfigure}[b]{0.48\linewidth}
        \centering
        \includegraphics[width=\linewidth]{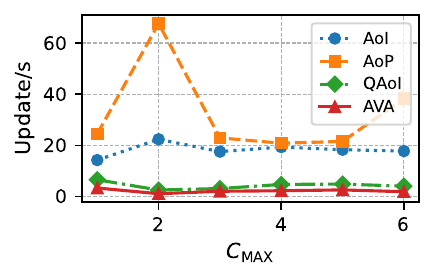}
        \caption{min=0.0333, max=0.05}
        \label{fig:update-j}
    \end{subfigure}
    \begin{subfigure}[b]{0.48\linewidth}
        \centering
        \includegraphics[width=\linewidth]{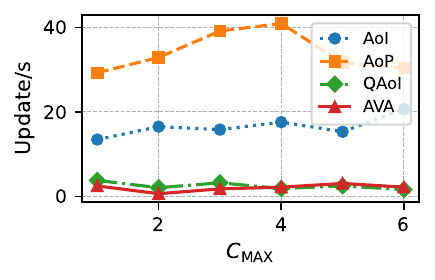}
        \caption{min=0.05, max=0.075}
        \label{fig:update-k}
    \end{subfigure}
    \hfill
    \begin{subfigure}[b]{0.48\linewidth}
        \centering
        \includegraphics[width=\linewidth]{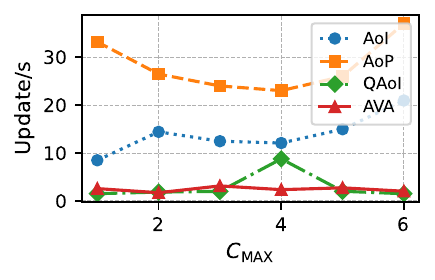}
        \caption{min=0.1, max=0.15}
        \label{fig:update-l}
    \end{subfigure}
    \caption{Update rate under uniform distributed user access patterns}
    \label{fig:update-uniform}
\end{figure}

\subsubsection{Effectiveness Evaluation}
To validate AVA’s enhancement of information effectiveness, we evaluate the accuracy as shown in Figs. \ref{fig:acc-deterministic} and \ref{fig:acc-uniform}. Overall, AVA sustains strong accuracy even at very low update rates, thereby reducing energy consumption and confirming the efficiency of our approach.

Under identical access patterns (e.g., \figurename~\ref{fig:acc-g}), increasing $C_{\text{MAX}}$ improves accuracy for all methods, reflecting reduced resource contention and more reliable edge‑side information.

\begin{figure}[htbp]
    \centering
    \begin{subfigure}[b]{0.48\linewidth}
        \centering
        \includegraphics[width=\linewidth]{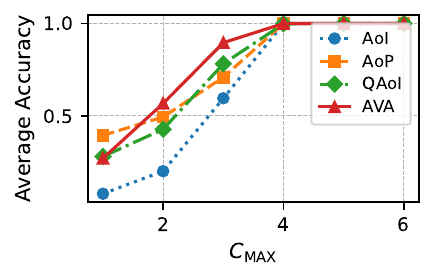}
        \caption{Request interval = 0.0167}
        \label{fig:acc-a}
    \end{subfigure}
    \hfill
    \begin{subfigure}[b]{0.48\linewidth}
        \centering
        \includegraphics[width=\linewidth]{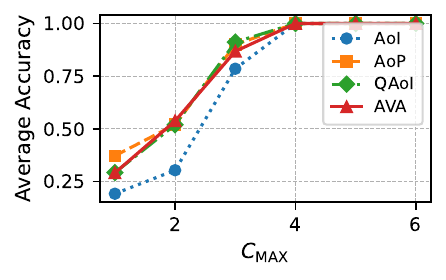}
        \caption{Request interval = 0.02}
        \label{fig:acc-b}
    \end{subfigure}
    \begin{subfigure}[b]{0.48\linewidth}
        \centering
        \includegraphics[width=\linewidth]{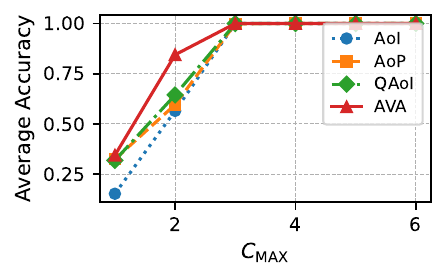}
        \caption{Request interval = 0.025}
        \label{fig:acc-c}
    \end{subfigure}
    \hfill
    \begin{subfigure}[b]{0.48\linewidth}
        \centering
        \includegraphics[width=\linewidth]{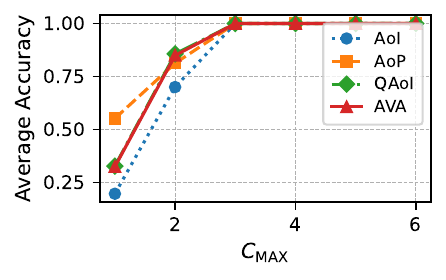}
        \caption{Request interval = 0.0333}
        \label{fig:acc-d}
    \end{subfigure}
    \begin{subfigure}[b]{0.48\linewidth}
        \centering
        \includegraphics[width=\linewidth]{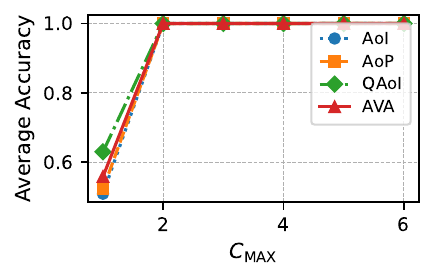}
        \caption{Request interval = 0.05}
        \label{fig:acc-e}
    \end{subfigure}
    \hfill
    \begin{subfigure}[b]{0.48\linewidth}
        \centering
        \includegraphics[width=\linewidth]{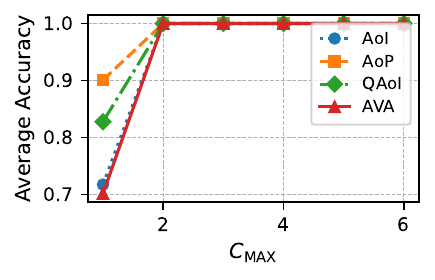}
        \caption{Request interval = 0.1}
        \label{fig:acc-f}
    \end{subfigure}
    \caption{Average accuracy under deterministic distributed user access patterns}
    \label{fig:acc-deterministic}
\end{figure}
\begin{figure}[htbp]
    \centering
    
    \begin{subfigure}[b]{0.48\linewidth}
        \centering
        \includegraphics[width=\linewidth]{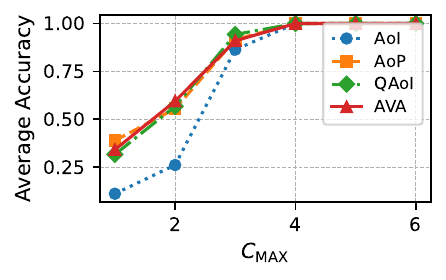}
        \caption{min=0.0167, max=0.025}
        \label{fig:acc-g}
    \end{subfigure}
    \hfill
    \begin{subfigure}[b]{0.48\linewidth}
        \centering
        \includegraphics[width=\linewidth]{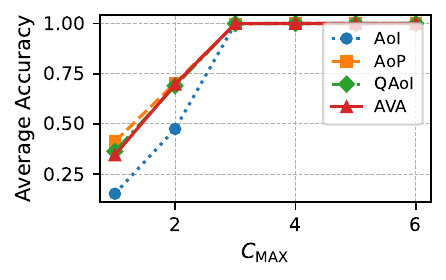}
        \caption{min=0.02, max=0.03}
        \label{fig:acc-h}
    \end{subfigure}
    \begin{subfigure}[b]{0.48\linewidth}
        \centering
        \includegraphics[width=\linewidth]{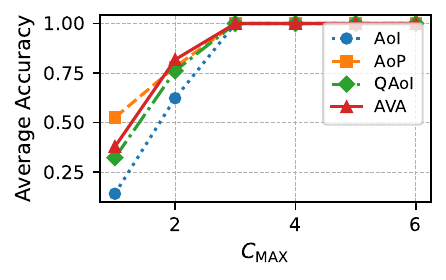}
        \caption{min=0.025, max=0.0375}
        \label{fig:acc-i}
    \end{subfigure}
    \hfill
    \begin{subfigure}[b]{0.48\linewidth}
        \centering
        \includegraphics[width=\linewidth]{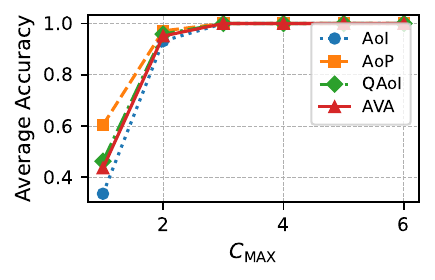}
        \caption{min=0.0333, max=0.05}
        \label{fig:acc-j}
    \end{subfigure}
    \begin{subfigure}[b]{0.48\linewidth}
        \centering
        \includegraphics[width=\linewidth]{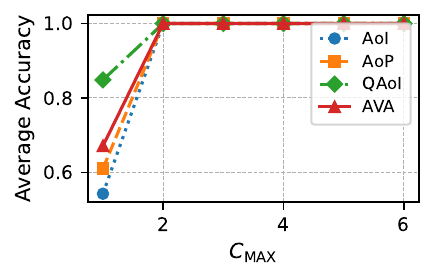}
        \caption{min=0.05, max=0.075}
        \label{fig:acc-k}
    \end{subfigure}
    \hfill
    \begin{subfigure}[b]{0.48\linewidth}
        \centering
        \includegraphics[width=\linewidth]{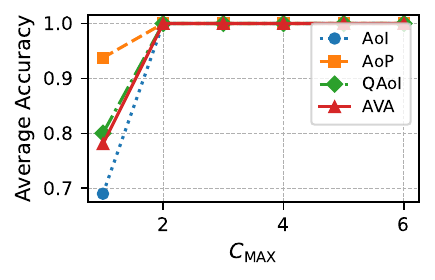}
        \caption{min=0.1, max=0.15}
        \label{fig:acc-l}
    \end{subfigure}
    \caption{Average accuracy under uniform distributed user access patterns}
    \label{fig:acc-uniform}
\end{figure}
Moreover, we measured the average AoI of the service information at the AP under both access distributions, as shown in Figs. \ref{fig:aoi-deterministic} and \ref{fig:aoi-uniform}. It is important to note that AoI captures only freshness, not effectiveness. Both AVA and QAoI maintain high AoI levels, reflecting their low update rates.

Furthermore, comparing the methods reveals that AVA yields even higher AoI, especially as service capacity increases. Combined with the accuracy results in Figs. \ref{fig:acc-deterministic} and \ref{fig:acc-uniform}, this demonstrates that a larger AoI does not imply ineffectiveness, thereby validating our underlying hypothesis.
\begin{figure}[htbp]
    \centering
    \begin{subfigure}[b]{0.48\linewidth}
        \centering
        \includegraphics[width=\linewidth]{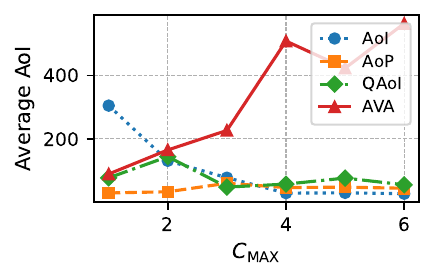}
        \caption{Request interval = 0.0167}
        \label{fig:aoi-a}
    \end{subfigure}
    \hfill
    \begin{subfigure}[b]{0.48\linewidth}
        \centering
        \includegraphics[width=\linewidth]{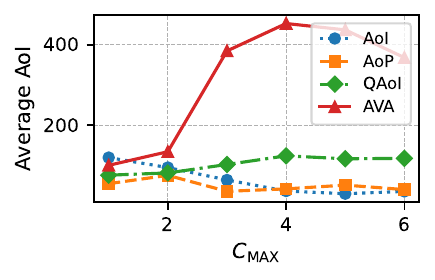}
        \caption{Request interval = 0.02}
        \label{fig:aoi-b}
    \end{subfigure}
    \begin{subfigure}[b]{0.48\linewidth}
        \centering
        \includegraphics[width=\linewidth]{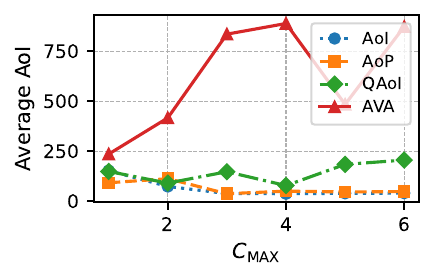}
        \caption{Request interval = 0.025}
        \label{fig:aoi-c}
    \end{subfigure}
    \hfill
    \begin{subfigure}[b]{0.48\linewidth}
        \centering
        \includegraphics[width=\linewidth]{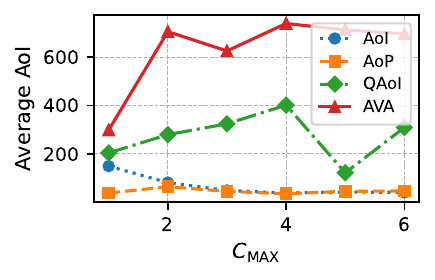}
        \caption{Request interval = 0.0333}
        \label{fig:aoi-d}
    \end{subfigure}
    \begin{subfigure}[b]{0.48\linewidth}
        \centering
        \includegraphics[width=\linewidth]{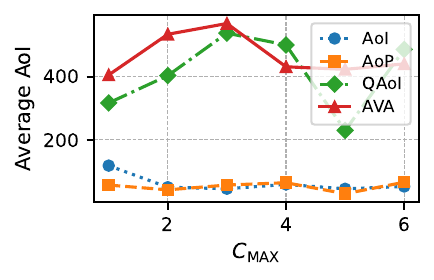}
        \caption{Request interval = 0.05}
        \label{fig:aoi-e}
    \end{subfigure}
    \hfill
    \begin{subfigure}[b]{0.48\linewidth}
        \centering
        \includegraphics[width=\linewidth]{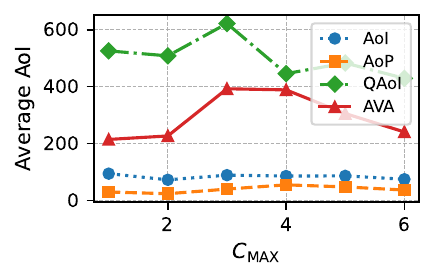}
        \caption{Request interval = 0.1}
        \label{fig:aoi-f}
    \end{subfigure}
    \caption{Average AoI under deterministic distributed user access patterns}
    \label{fig:aoi-deterministic}
\end{figure}

\begin{figure}[htbp]
    \centering
    \begin{subfigure}[b]{0.48\linewidth}
        \centering
        \includegraphics[width=\linewidth]{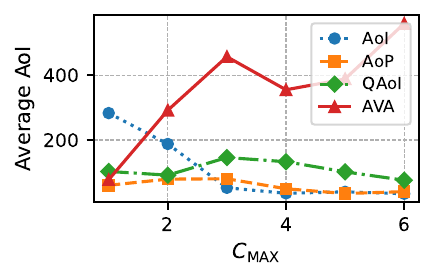}
        \caption{min=0.0167, max=0.025}
        \label{fig:aoi-g}
    \end{subfigure}
    \hfill
    \begin{subfigure}[b]{0.48\linewidth}
        \centering
        \includegraphics[width=\linewidth]{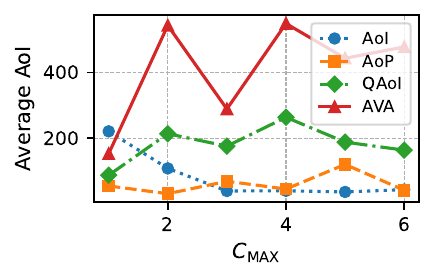}
        \caption{min=0.02, max=0.03}
        \label{fig:aoi-h}
    \end{subfigure}
    \begin{subfigure}[b]{0.48\linewidth}
        \centering
        \includegraphics[width=\linewidth]{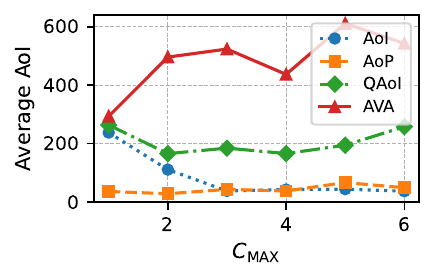}
        \caption{min=0.025, max=0.0375}
        \label{fig:aoi-i}
    \end{subfigure}
    \hfill
    \begin{subfigure}[b]{0.48\linewidth}
        \centering
        \includegraphics[width=\linewidth]{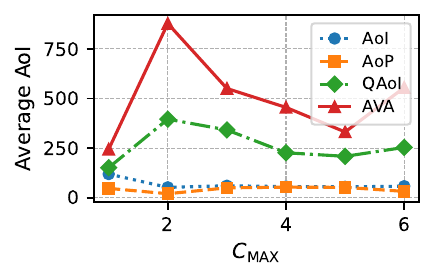}
        \caption{min=0.0333, max=0.05}
        \label{fig:aoi-j}
    \end{subfigure}
    \begin{subfigure}[b]{0.48\linewidth}
        \centering
        \includegraphics[width=\linewidth]{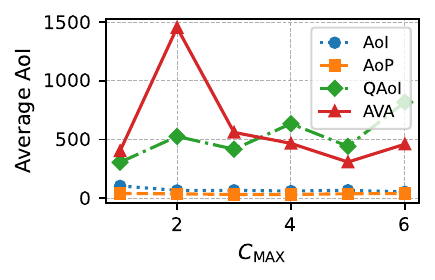}
        \caption{min=0.05, max=0.075}
        \label{fig:aoi-k}
    \end{subfigure}
    \hfill
    \begin{subfigure}[b]{0.48\linewidth}
        \centering
        \includegraphics[width=\linewidth]{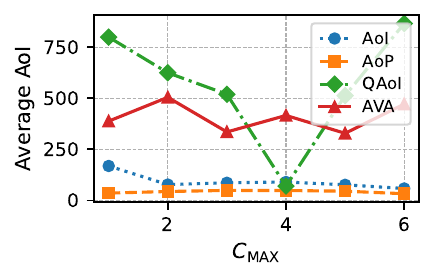}
        \caption{min=0.1, max=0.15}
        \label{fig:aoi-l}
    \end{subfigure}
    \caption{Average AoI under uniform distributed user access patterns}
    \label{fig:aoi-uniform}
\end{figure}

\section{Conclusion} \label{sect:conclusion}
In this work, we introduced the Age‑and‑Value‑Aware (AVA) metric to jointly capture information freshness and semantic consistency for status updates in compute‑first networking. By formulating the update decision as an MDP and optimizing it via PPO, AVA dynamically balances update cost against decision accuracy. Our extensive experiments under both deterministic and uniform access patterns demonstrate that AVA achieves substantially lower update rates than AoI, AoP, and QAoI while sustaining equal or higher task‑forwarding accuracy. Finally, our analysis of average AoI confirms that increased age need not imply ineffectiveness when updates are value‑aware. Together, these results validate AVA as an efficient, robust strategy for status refresh in compute‑first networking.
Future work will extend AVA to multi‐source CFN and incorporate queuing dynamics for more realistic deployments.
\balance
\bibliographystyle{plain} 
\bibliography{references}

\end{document}